\newtheorem*{thm*}{Theorem}
\begin{document}
\theoremstyle{plain}
\title{Resource Reduction in Multiplexed High-Dimensional Quantum Reed-Solomon Codes}

\author{Shin Nishio}  \email{parton@nii.ac.jp}
  \affiliation{Department of Informatics, School of Multidisciplinary Sciences, SOKENDAI (The Graduate University for Advanced Studies), 2-1-2 Hitotsubashi, Chiyoda-ku, Tokyo, 101-8430, Japan}
  \affiliation{Quantum Information Science and Technology Unit, Okinawa Institute of Science and Technology Graduate University, Onna-son, Kunigami-gun, Okinawa 904-0495, Japan}
  \affiliation{National Institute of Informatics, 2-1-2 Hitotsubashi, Chiyoda-ku, Tokyo, 101-8430, Japan}

\author{Nicolò Lo Piparo}  
  \affiliation{Quantum Information Science and Technology Unit, Okinawa Institute of Science and Technology Graduate University, Onna-son, Kunigami-gun, Okinawa 904-0495, Japan}
  \affiliation{National Institute of Informatics, 2-1-2 Hitotsubashi, Chiyoda-ku, Tokyo, 101-8430, Japan}

\author{Michael Hanks} 
\affiliation{QOLS, Blackett Laboratory, Imperial College London, London SW7 2AZ, United Kingdom}

\author{William John Munro}
  \affiliation{NTT Basic Research Laboratories \& NTT Research Center for Theoretical Quantum Physics,
NTT Corporation, 3-1 Morinosato-Wakamiya, Atsugi, Kanagawa, 243-0198, Japan.}
  \affiliation{National Institute of Informatics, 2-1-2 Hitotsubashi, Chiyoda-ku, Tokyo, 101-8430, Japan}

\author{Kae Nemoto} \email{nemoto@nii.ac.jp}
  \affiliation{Quantum Information Science and Technology Unit, Okinawa Institute of Science and Technology Graduate University, Onna-son, Kunigami-gun, Okinawa 904-0495, Japan}
  \affiliation{National Institute of Informatics, 2-1-2 Hitotsubashi, Chiyoda-ku, Tokyo, 101-8430, Japan}
  \affiliation{Department of Informatics, School of Multidisciplinary Sciences, SOKENDAI (The Graduate University for Advanced Studies), 2-1-2 Hitotsubashi, Chiyoda-ku, Tokyo, 101-8430, Japan}  

\begin{abstract}
Quantum communication technologies will play an important role in quantum information processing in the near future as we network devices together. However, their implementation is still a challenging task due to both loss and gate errors. Quantum error correction codes are one important technique to address this issue. In particular, the Quantum Reed-Solomon codes are known to be quite efficient for quantum communication tasks. The high degree of physical resources required, however, makes such a code difficult to use in practice. A recent technique called quantum multiplexing has been shown to reduce resources  by using multiple degrees of freedom of a photon. In this work, we propose a method to decompose multi-controlled gates using fewer $\rm{CX}$ gates via this quantum multiplexing technique. We show that our method can significantly reduce the required number of $\rm{CX}$ gates needed in the encoding circuits for the quantum Reed-Solomon code. Our approach is also applicable to many other quantum error correction codes and quantum algorithms, including Grovers and quantum walks.
\end{abstract}
\keywords{Quantum Communication, Quantum Error Correction, Quantum Multiplexing, Quantum Networking}
\maketitle
\section{Introduction}
Quantum communication systems utilizing the principles of superposition and entanglement will provide new capabilities that we can not achieve in our current telecommunication counterparts \cite{munro2015inside, duan2001long, jiang2009quantum, lo1999unconditional, hwang2003quantum}. Such communication is expected to provide the basis for distributed quantum information processing systems, including quantum key distribution \cite{bennett2020quantum, ekert1991quantum}, blind quantum computation \cite{arrighi2006blind, broadbent2009universal}, distributed quantum computation \cite{cirac1999distributed}, quantum remote sensing \cite{schanda2012physical}, and the quantum internet \cite{kimble2008quantum}. However, the fragile nature of quantum states as they propagate through such channels will severely limit the performance of the systems. Quantum error correction codes (QECCs) are a fundamental tool to address these issues as they can correct both channel loss and general errors (gate errors, for instance). 

Since the first QECC was found by Shor \cite{shor1995scheme}, quite a number of codes have been proposed, including the stabilizer codes \cite{gottesman1997stabilizer}, quantum low-density parity-check codes \cite{mackay2004sparse}, and GKP codes \cite{gottesman2001encoding}. Further, the Calderbank-Shor-Steane (CSS) codes \cite{calderbank1996good}  are widely studied because they are quantum codes that can be constructed using a variety of classical codes and inherit the good properties of the existing classical codes \cite{calderbank1996good}. For quantum computation, topological CSS codes, including the surface codes \cite{kitaev1997quantum} have known useful properties such as high thresholds and capability of practical transversal gates due to the locality of stabilizers. For communication and memory, the delay introduced in the decoding is not as important as it is in computation \cite{pierce1965failure}, so complex decoding circuits are acceptable compared to those typically used in the computation. Therefore, for quantum communication and quantum memory-related tasks, code rate and minimum distance are more important than the locality of the stabilizers. It is known that the Quantum Reed-Solomon code \cite{Grassl1999} is one of the most efficient codes for quantum communication from that point of view. 

The classical Reed-Solomon (RS) code \cite{reed1960polynomial} has been used in various communication systems \cite{wicker1999reed, feng2006protecting} due to it being a maximum distance separable (MDS) code \cite{macwilliams1977theory}. MDS codes are important because they can detect and correct the greatest number of errors for a fixed codeword length $n$ and message length $k$. In the RS code, multiple consecutive bits correspond to an element of a Galois Field (GF) \cite{reed1960polynomial}. This allows this code to be particularly resilient to a class of errors called burst errors (errors occurring in consecutive bits) \cite{wicker1999reed}, where multiple bits are used to represent a single element of the GF. Now the Quantum Reed-Solomon (QRS) codes \cite{Grassl1999} are known to be one of the most efficient CSS codes because of their excellent ability to correct qudit loss errors, which are the most critical issue in long-distance quantum communication in optics \cite{muralidharan2014ultrafast}. However, QECCs require significant physical sources (both in terms of the numbers of photons and qubits) for their realization, making their implementation quite challenging. Recently \cite{piparo2020resource} it was shown that quantum multiplexing allows one to reduce the number of resources when applied to the redundancy code \cite{ralph2005loss} and small scale QRS codes. This was achieved by using multiple degrees of freedom of the photon. Here we show that by using such quantum multiplexing techniques, we can drastically reduce the number of controlled-X ($\rm{CX}$) gates required to implement the encoding circuit of QRS codes that could be used in quantum communication tasks.

Now our paper is organized as follows. In Section \ref{sec_QRS}, we will present the construction of QRS codes and their coding circuits. We then briefly review in Section \ref{sec_QM} the quantum multiplexing technique and propose a method to implement multiple controlled gates ($C_{k}X$ gates) using fewer $\rm{CX}$ gates by utilizing quantum multiplexing. Section \ref{sec_result} then shows the degree of gate reduction that can be achieved when quantum multiplexing is applied to the encoding circuit of the QRS code. The results and applications of the proposed method to other quantum tasks are discussed in Section \ref{sec_conc}.

\section{The Quantum Reed-Solomon Code} \label{sec_QRS}
\begin{figure}[b]
\begin{center}
    \includegraphics[width=9cm, keepaspectratio]{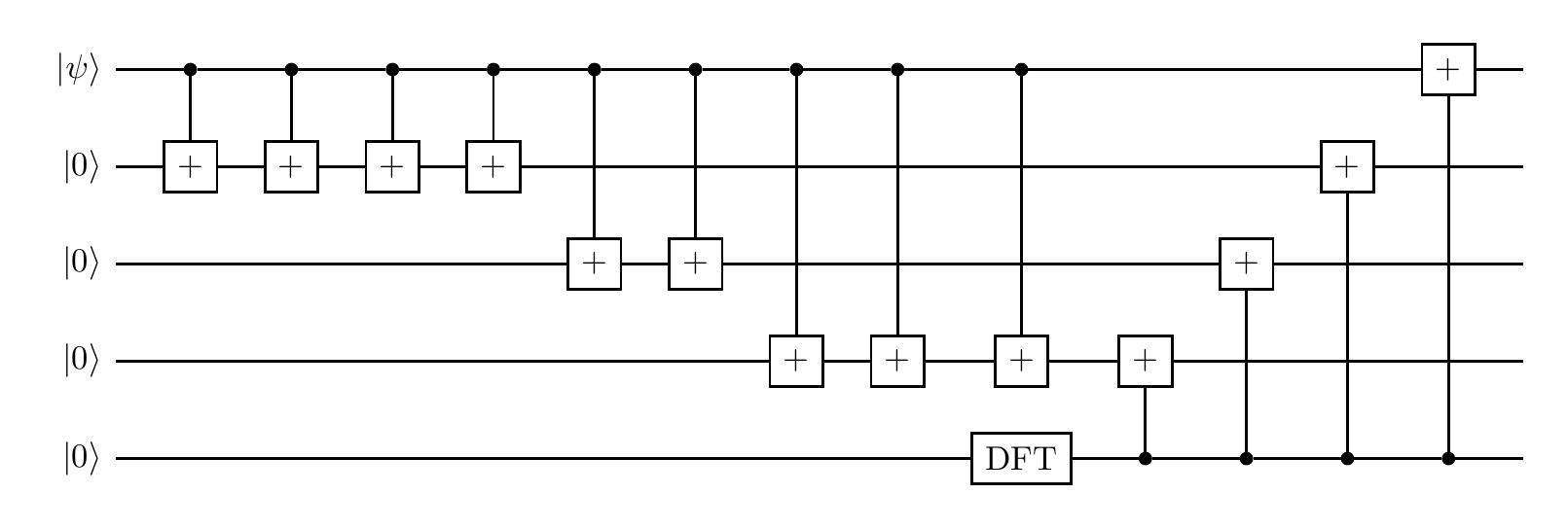}
    \caption{The encoding circuit for the $[[5,1,3]]_5$ Quantum Reed-Solomon code using $5$ dimensional qudits. The two qudit gates are defined by the black dot representing the control and the target with a square box containing the ``+'' symbol represent the SUM gates. The discrete Fourier transform gate is represented by the DFT labeled box.}
    \label{encoder}
\end{center}
\end{figure}
Let us begin with a brief overview of the Quantum Reed-Solomon code. The $[[d,2K-d,d-K+1]]_d$ QRS code \cite{Grassl1999, ketkar2006nonbinary, la2009constructions, muralidharan2018one} is defined by the CSS construction \cite{calderbank1996good} of the classical RS code in which $d$ (a prime number) is the dimension of the qudits used to encode the logic states. Further, $2K-d$ is the number of logical qudits, while $d-K+1$ is the minimum distance of the code. In our work, we will consider the $2K-d = 1$ case, which has the ability to retrieve the original encoded quantum information when $(d-1)/2$ or less qudits are lost. The encoding quantum circuit for the QRS code shown in Fig.\ref{encoder} can be implemented using a series of quantum SUM gates \cite{gottesman2001encoding} and a discrete Fourier transform (DFT) gate \cite{grassl2003efficient}. The SUM gate is a generalization of the $\rm{CX}$ gate for $d$-dimensional qudit and is given by:
$$\textrm{SUM}(\ket{A}\ket{B}) = \ket{A}\ket{(A+B) \,[\textrm{mod}\,d]}$$
where $A$ and $B$ are integers $\leq d-1$. Next the DFT gate is a generalization of the Hadamard gate, which when applied to a single qudit, creates a superposition of states given by:
$$\textrm{DFT}(\ket{0})=\frac{\ket{0}+\ket{1}+\cdots + \ket{d-1}}{\sqrt{d}}$$
To illustrate how these operations work, Fig.\ref{encoder} shows an example of the encoding circuit for the $[[5,1,3]]_5$ code.

The number of SUM gates required to create the $[[d,1,d+1/2]]_d$ code using a $d$-dimensional qudit is $(d^2 + d - 4)/2$, and is plotted in Fig.\ref{req_sum}. This number increases quadratically with the qudit dimension. Further, the number of $\rm{CX}$ gates used in each SUM gate also increases with the qudit dimension (as we will show later), making the implementation of higher-dimensional QRS codes more difficult. For our purposes, it is more convenient to encode each logical $d$-dimensional qudit of the QRS code with $k$ qubits where $2^{k-1}<d<2^k$. The SUM gate is essentially a modulo adder and can be decomposed into two parts. The first part is the Ripple carry adder (RCA) \cite{vedral1996quantum} that performs the following transformation:
$$\textrm{RCA}(\ket{A},\ket{B})=\ket{A}\ket{(A+B)\,[\textrm{mod} \,2^k]}.$$
The second part is a modulo operation that performs the following transformation:
$$\textrm{Mod}(\ket{A},\ket{(A+B)\,[\textrm{mod}\,2^k]}) =\ket{A}\ket{(A+B)\,[\textrm{mod}\,d]}.$$ 
In such a case, the RCA part adds two binary numbers having $k$ bits. This part utilizes auxiliary qubits called ``carry'' to store the outcome of the addition of two qubits having both values $1$. Further, the modulo operation can also be split into two distinct elements. In the first element, a series of logic gates check whether the outcome of the RCA exceeds $d$. In that case, the result will be stored in an auxiliary qubit we label ``check if''. A specific example of how these auxiliary qubits have been used is presented in Appendix \ref{modadder}. The second element, which we label as a conversion element, transforms the output of the RCA part in $[\textrm{mod}\,2^k]$ representation to the desired output of the SUM gate with $[\textrm{mod}\,d]$ representation.
\begin{figure}[h!]
\begin{center}
    \includegraphics[width=8cm, keepaspectratio]{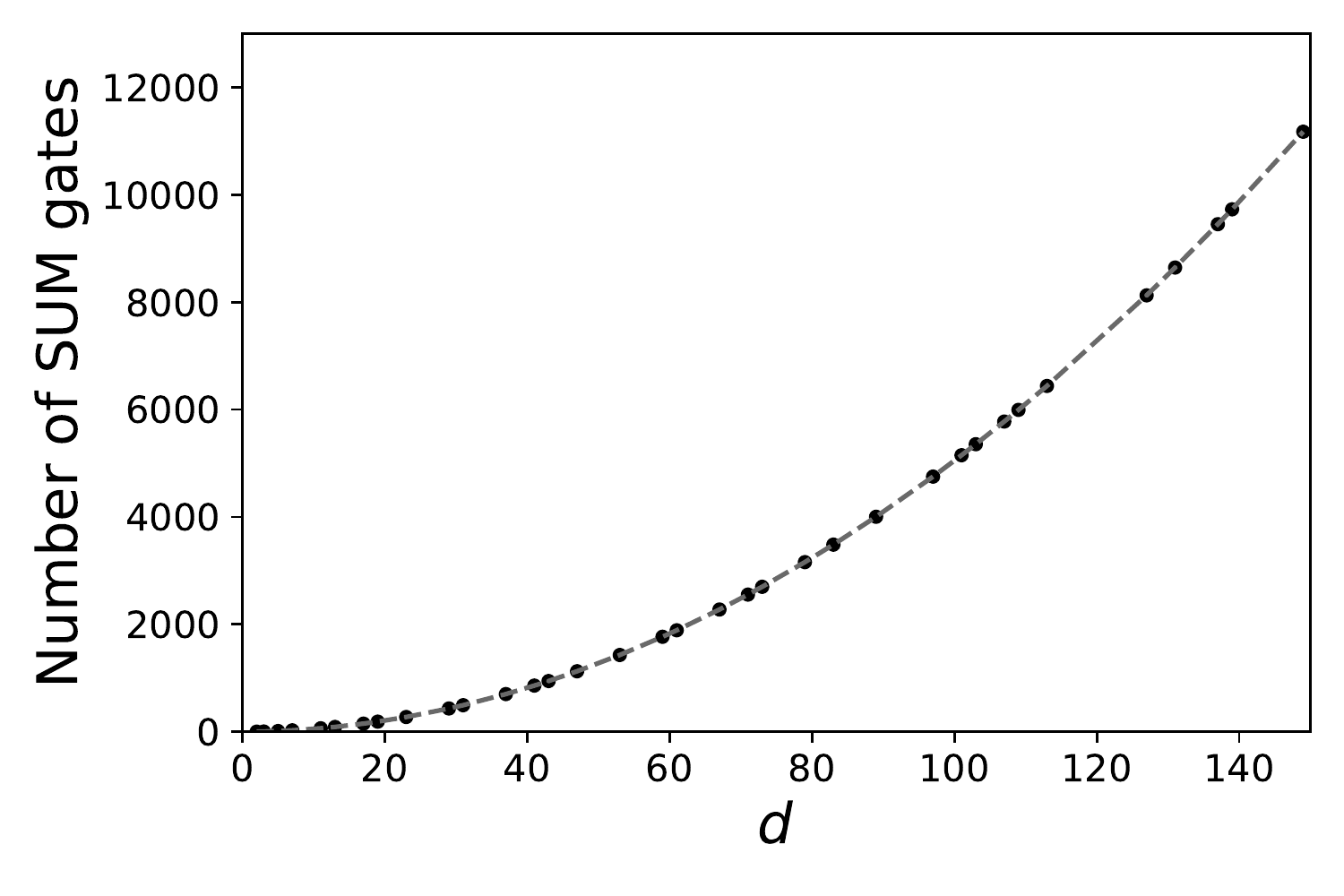}
    \caption{Required number of SUM gates for encoding the $[[d,1,(d+1)/2]]_d$ QRS codes versus qudit dimension $d$. The dots represent prime $d$ values. }
    \label{req_sum}
\end{center}
\end{figure}

We need to consider this in a little more detail and so let us look at the various outcomes of the RCA, which correspond to our three different cases. 
In the first case, where the outcome of the RCA is less than $d$, the conversion element will not change this outcome; hence no further action is needed.
In the second case, where the outcome is greater than $d-1$ but less than $2^k$, ``check if'' auxiliary qubits are used for storing each value. Those values are then used to convert the outcome to the desired state in the conversion element.
In the third case, where the outcome is greater than or equal to $2^k$, ``check if'' auxiliary qubits will be used to store each value, and the biggest ``carry'' qubit is also used to distinguish the value and the value minus $2^k$ in the conversion element. 
However, the ``check if'' qubit is not required when the outcome is $2^k$ if and only if $2(d-1)=2^k$. This is because the biggest ``carry'' can be used as the ``check if'' value since $2^k$ is the only one value that is bigger than $2^k$.
This SUM gate construction requires $k+d-2$ auxiliary qubits where $k$-qubits are for the ``carry'' and  $d-2$ qubits are for the ``check if''. 

The total number $N_{\rm{SUM}}$ of $\rm{CX}$ gates for each $\rm{SUM}$ gate in a $d$-qudit system with $2^{k-1} < d \leq 2^k$ is given by: $N_{\rm{RCA}} + N_{\rm{M}}$, where $N_{\rm{RCA}}(N_{\rm{M}})$ are the number of gates for the RCA (modulo) parts, respectively. The number of gates required by those are presented in the TABLE \ref{N_sum_table} where $C_{k}X$ is the $X$ gate controlled by $k$-qubits ($C_{2}X$ is controlled by $2$ and is the well known Toffoli gate), $C_{k+1}X$ gate is the $X$ gate controlled by $k$ qubits + 1 ``carry'' auxiliary qubit.

\begin{table*}[]
\begin{tabular}{l|l|l|l|l|l|}
\cline{2-6}
                                 & $C_{k+1}X$               & $C_kX$                 & $C_2 X$ & $C_1 X$ &     Condition                      \\ \hline
\multicolumn{1}{|l|}{$N_{\textrm{RCA}}$} &                           &                        & $3k-2$   & $2k-1$   &                           \\ \hline
\multicolumn{1}{|l|}{$N_M$}       &                           & $d-1$                    &        & $\displaystyle\sum_{i = d}^{2(d-1)} H_D(i,i\,[\textrm{mod}\,d])$       & $\rm{if}\,2(d-1)\leq 2^k$ \\ \cline{2-6} 
\multicolumn{1}{|l|}{}           & $2d-2^k-1$ & $2^k-d$ &        &$\displaystyle\sum_{i = d}^{2(d-1)} H_D(i,i\,[\textrm{mod}\,d])$        & $\rm{if}\,2(d-1)>2^k$     \\ \hline
\end{tabular}
\caption{The number of gates $N_{\textrm{RCA}}(N_M)$ required for the RCA (Modulo) operations in the $\rm{SUM}$ gate, respectively. $H_D(a, b)$ is a function that returns the Hamming distance between the binary representations of the input vector $a$ and $b$.}
\label{N_sum_table}
\end{table*}

Now in Fig.\ref{cx_full_adder} we plot $N_{\textrm{RCA}}$ versus $d$. Since the RCA part is a simple $k$ bit ripple carry adder, $N_{\textrm{RCA}}$ depends only on integer $k$ for $2^{k-1}<d<2^k$ which is shown as the gray line. When $d$ is sufficiently large, $N_{\textrm{RCA}}$ can be negligibly small compared to $N_M$, and so $N_{\textrm{SUM}}$ is dominated by $N_M$.

In the modulo conversion part, the $C_{k}X$ gate can be decomposed into $4(k-2)$ $\rm{C_{2}X}$ gates and the $C_{k+1}X$ gate can be decomposed into $4(k-1)$ $\rm{C_{2}X}$ gates as shown in \cite{barenco1995elementary}. Further the $\rm{C_{2}X}$ gate can be decomposed into $6\,\rm{CX}$, $2\,H$, $3\,T^\dagger$ and $5\,T$ gates. We will refer to this decomposition as a ``general decomposition'' and compare it to our more efficient multiplexing decomposition later on. In the next section, we will show that applying quantum multiplexing can drastically reduce the number of $\rm{CX}$ gates required to implement these circuits. The details of the modulo adder implementation are described in Appendix \ref{modadder}.

\setcounter{figure}{2}
\begin{figure}[htb]
    \begin{center}
        \includegraphics[width=8cm, keepaspectratio]{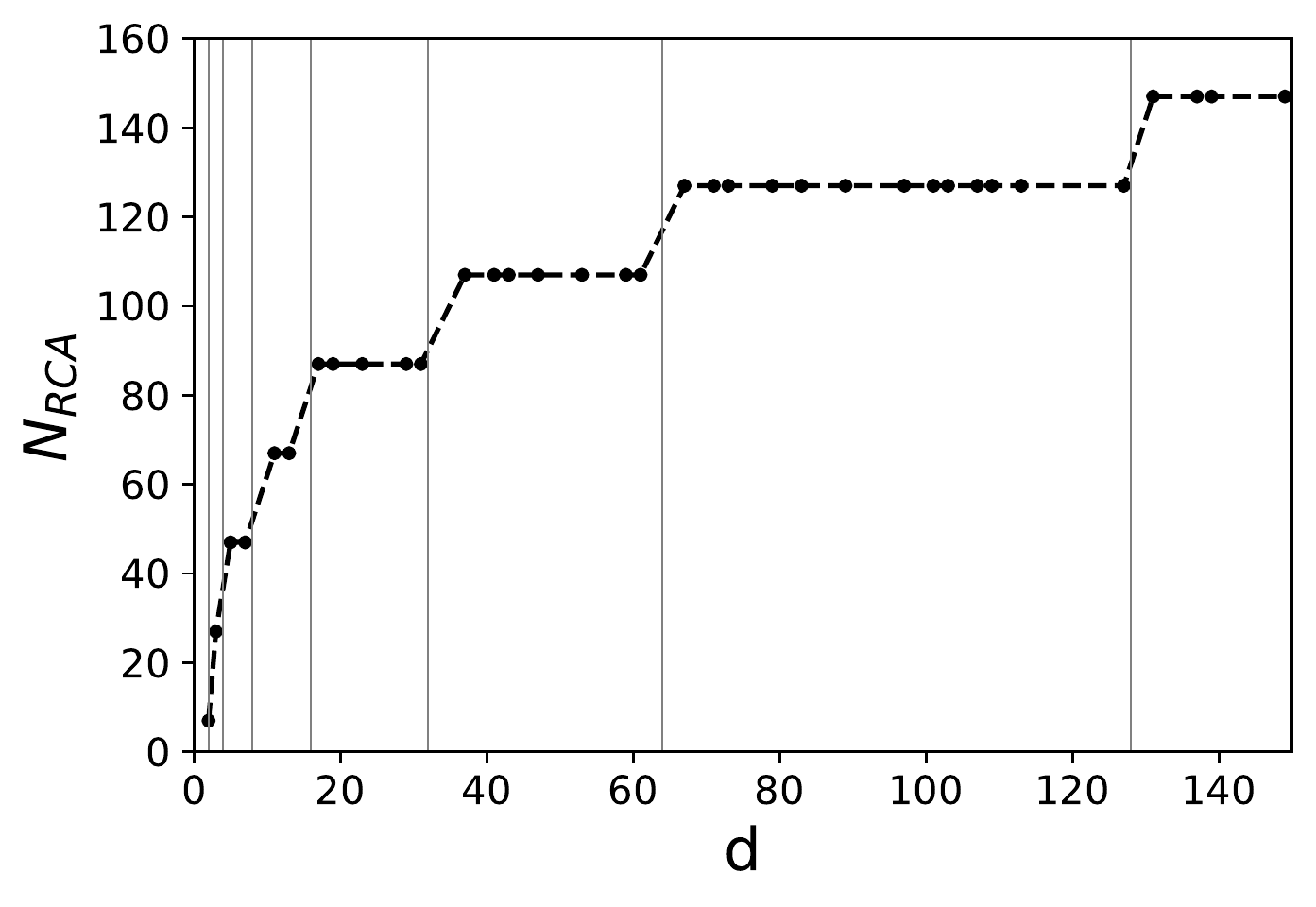}
        \caption{The number $N_{\rm{RCA}}$ of $\rm{CX}$ gates required for performing the RCA part of the SUM gate for $d$-dimensional qudits. The vertical gray lines correspond to the $2^m$ integer values, while the dotted lines between data points are used as a guide for the eye. }
        \label{cx_full_adder}
    \end{center}
\end{figure}

\section{Quantum Multiplexing}
\label{sec_QM}
The quantum multiplexing technique \cite{piparo2019quantum} aims at reducing the number of physical resources needed for a quantum communication system by using multiple degrees of freedom (DOF) of the single-photon to carry and process information. For instance, applying quantum multiplexing to error correction codes permits one to reduce drastically the number of single photons and qubits required to encode the information shared by Alice and Bob \cite{piparo2019quantum}. Another remarkable feature of quantum multiplexing is that it enables us to substitute a Toffoli gate \cite{piparo2020aggregating}, which normally requires $6\,\rm{CX}$ gates, with a single $\rm{CX}$ gate between two multiplexed photons (or no $\rm{CX}$ gates on a single photon). This is done by splitting one DOF of a photon into two separate spatial modes and applying a $\rm{CX}$ gate directly on one mode, as shown in Fig.\ref{multiplexing_and_circuit}(a). So the natural question is how we apply it to our situation. 
\setcounter{figure}{3} 
\begin{figure}[htb]
    % \begin{center}
    \subfigure[]{%ccx_original
        \includegraphics[clip, width=1\columnwidth]{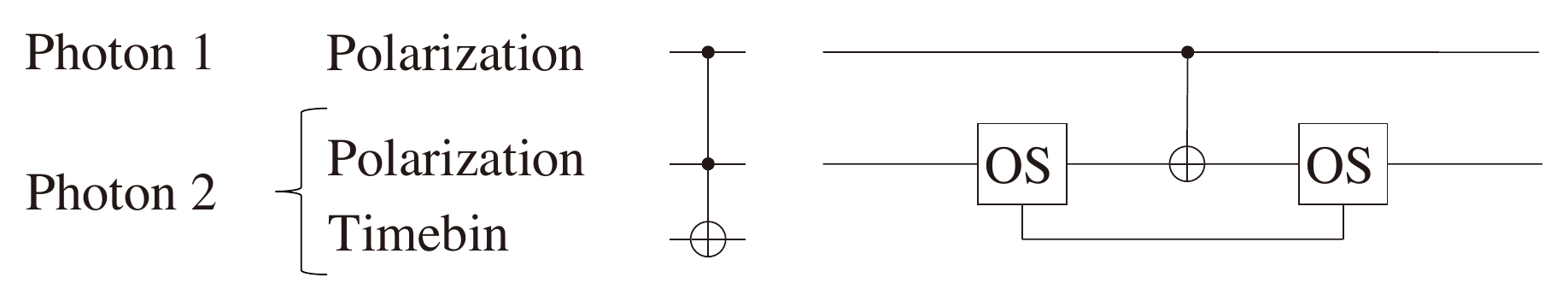}}
        \label{ccx_originals}
    \subfigure[]{%ccx_between_2_photons.pdf
        \includegraphics[clip, width=1\columnwidth]{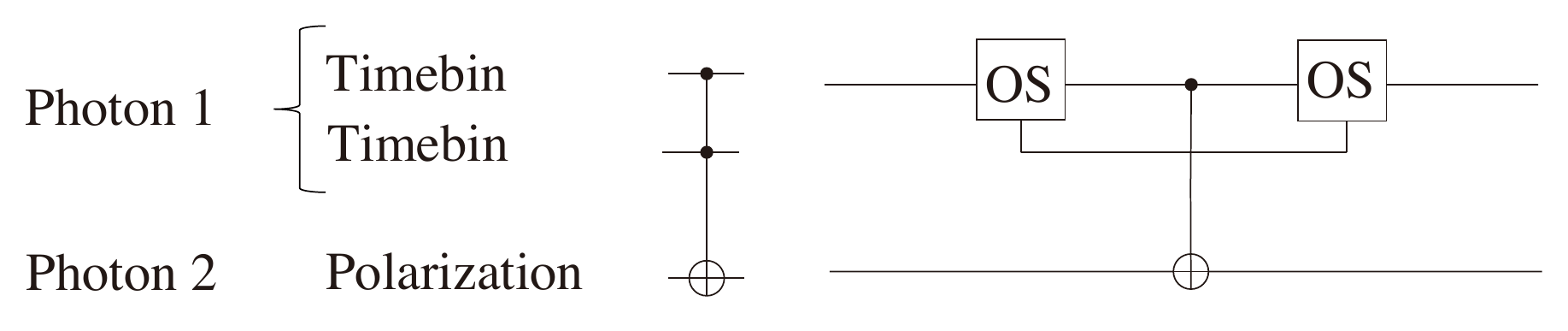}}%
        \label{ccx_dec}
    \subfigure[]{%cnx.png
        \includegraphics[clip, width=1\columnwidth]{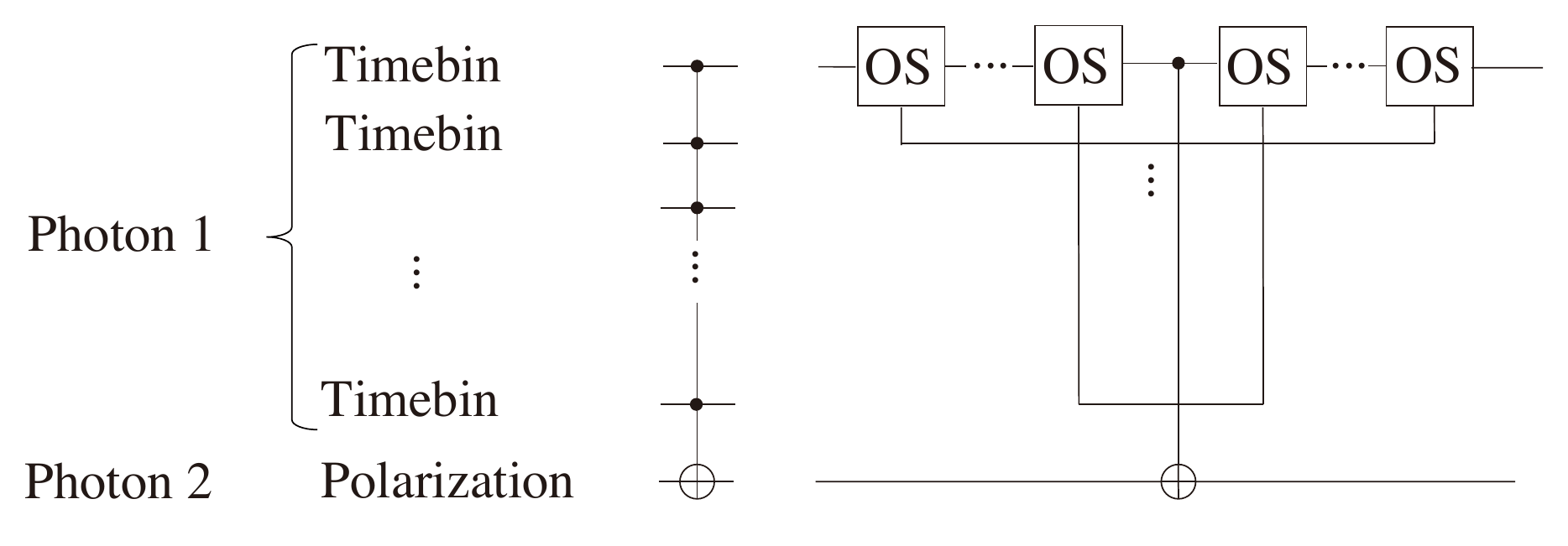}}%
        \label{ckx_dec}
    % \end{center}
    \caption{(a) Quantum circuit showing the $\rm{C_{2}X}$ gate between two photons (left) and a multiplexing circuit in which the $\rm{C_{2}X}$ gate is realized by an optical switch (OS) and a $\rm{CX}$ gate (right). (b) The $\rm{C_{2}X}$ gate a photon having as control two timebins and another photon having as target the polarization DOF (left) and its optical implementation (right). (c) The $C_{k}X$ gate between two photons (left) and its optical implementation (right).}
    \label{multiplexing_and_circuit}
\end{figure}

In our work, we first generalize this method to a $\rm{C_{2}X}$ gate between a photon having as control two timebins and another photon having as target the polarization DOF as shown in Fig.\ref{multiplexing_and_circuit}(b). Furthermore, we generalize the result of \cite{piparo2020aggregating} by assuming that we can also substitute the $C_{k}X$ gate and the $C_{k+1}X$ gate by a single $\rm{CX}$ and a $\rm{C_{2}X}$ gate respectively when quantum multiplexing is applied. To this aim, we require multiple optical switches OS (if the time-bin DOF is being used) to split each component belonging to the photon having multiple controls into single spatial modes. Then a single $\rm{CX}$ is applied between the relevant component and the target photon. Finally, a series of optical switches will restore the components into a single spatial mode, as shown in Fig.\ref{multiplexing_and_circuit}(c). Our assumption can be proved by using the induction principle shown in Appendix \ref{Proof}. 

We use this result then to substitute each $C_{k}X$ and $C_{k+1}X$ gates required in the modulo conversion part of our encoding scheme with a single $\rm{CX}$ gate and $\rm{C_{2}X}$ gate, respectively. Note that $C_{k+1}X$ requires a $\rm{C_{2}X}$ gate instead of a $\rm{CX}$ gate since $C_{k+1}X$ has a control qubit in the auxiliary photon. We will refer to this decomposition as a ``multiplexing decomposition''. This will considerably decrease the number of gates used, as we show in the next section.

\section{Performance}
\label{sec_result}
\begin{figure*}[ht]
    \subfigure[]{%
        \includegraphics[clip, width=1\columnwidth]{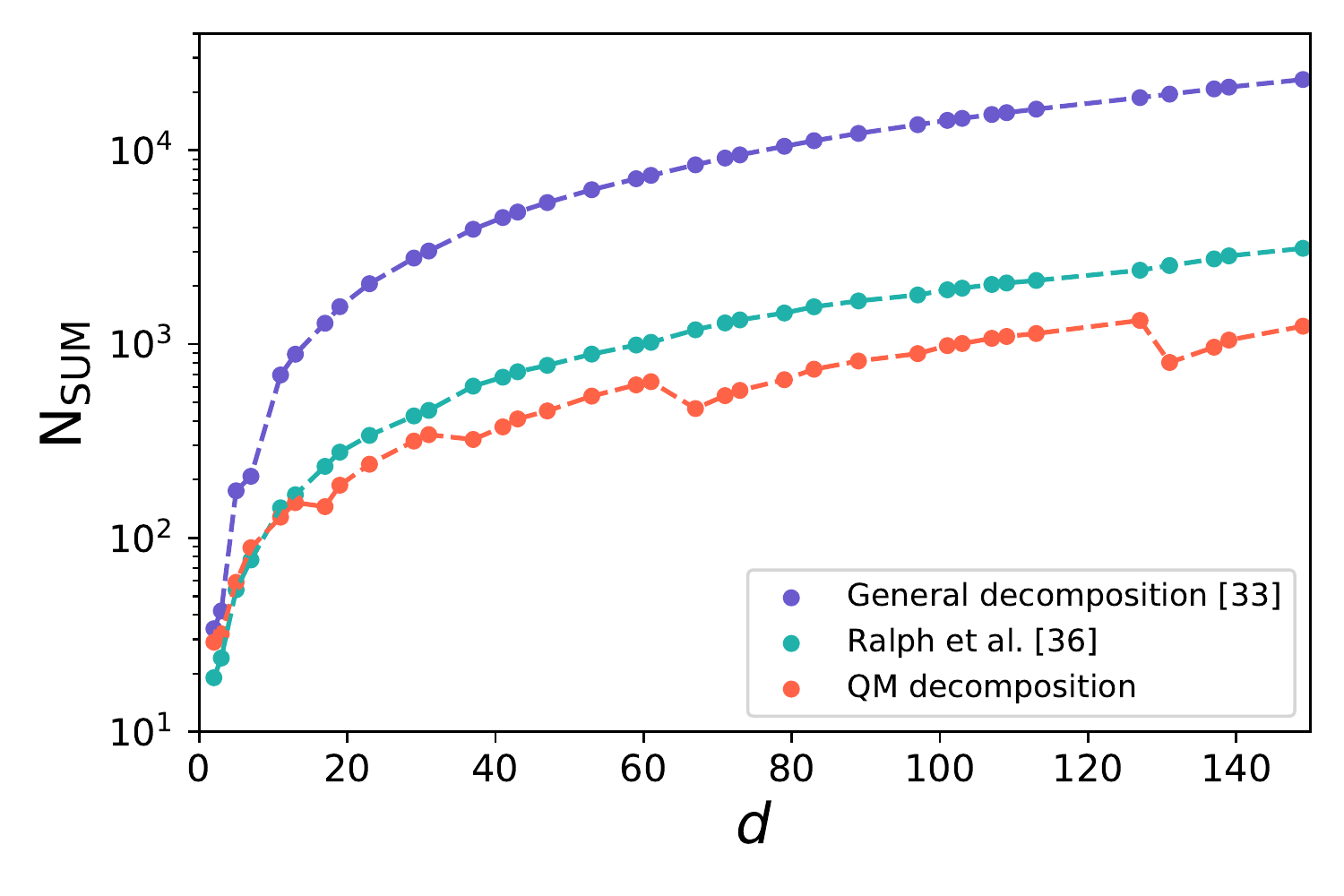}}%
        \label{sum_gate_cost}
    \subfigure[]{%
        \includegraphics[clip, width=1\columnwidth]{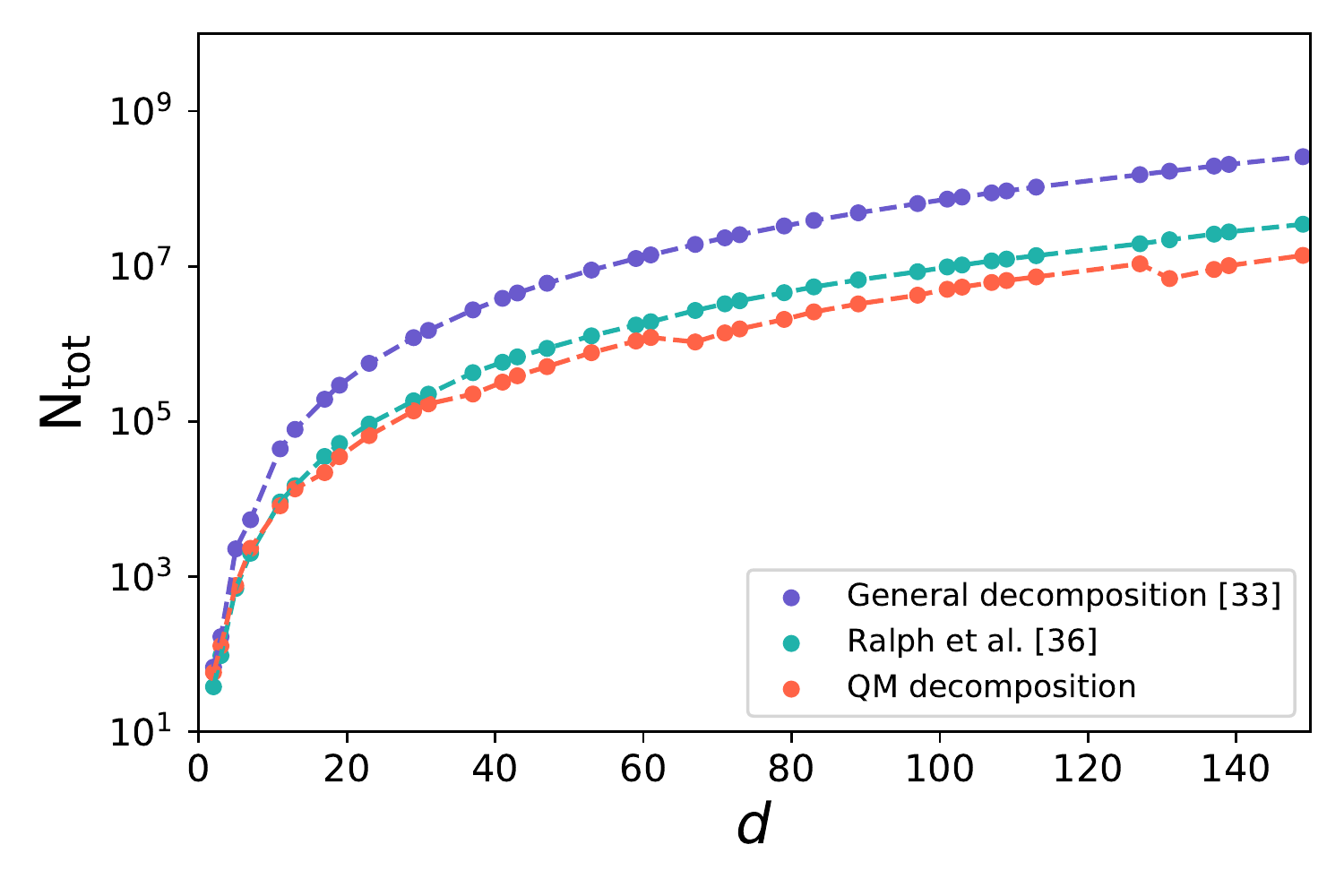}}%
        \label{encoder_cost}\\
    \subfigure[]{%
        \includegraphics[clip, width=1\columnwidth]{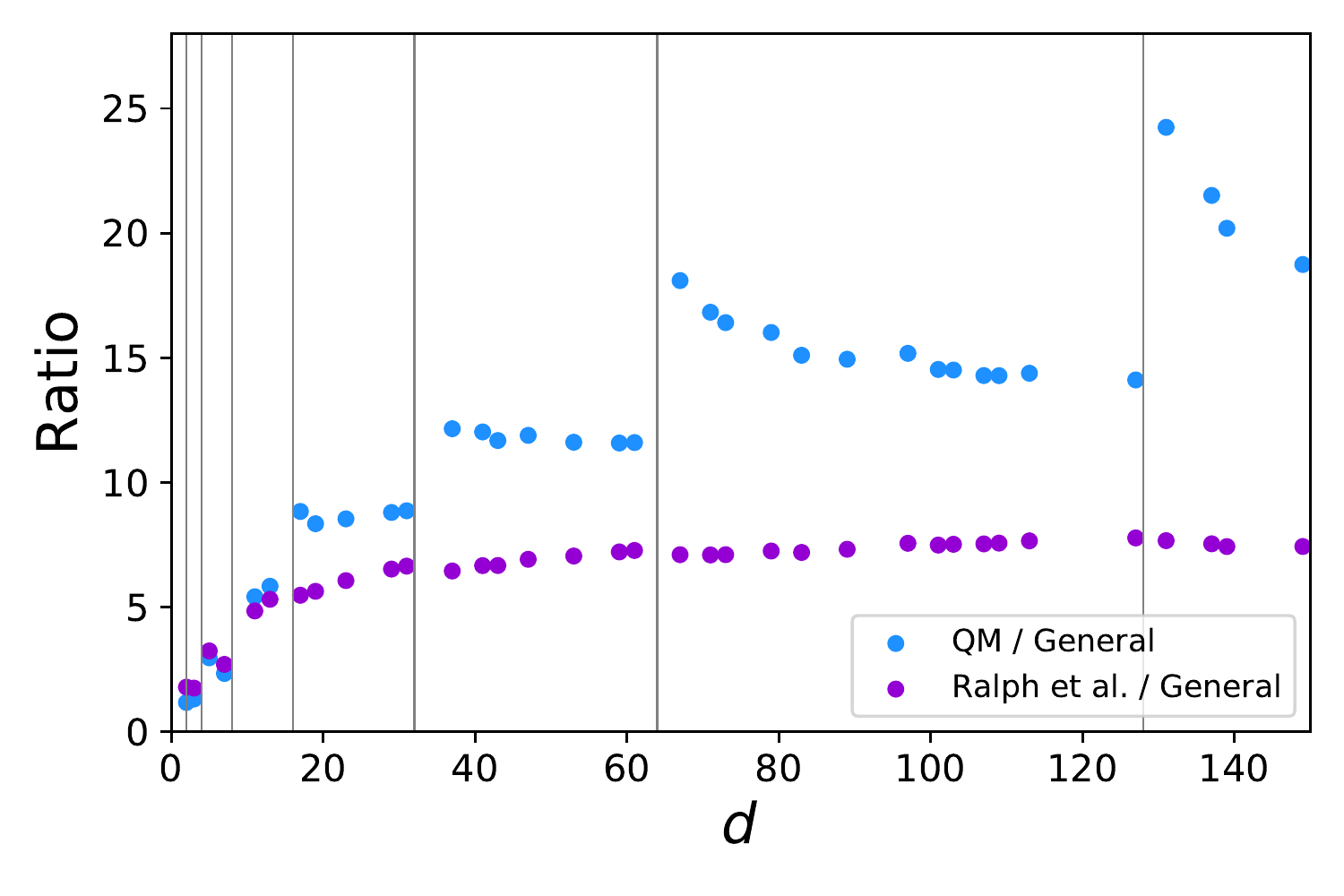}}%
        \label{sum_gate_ratio}
    \subfigure[]{%   
        \includegraphics[clip, width=1\columnwidth]{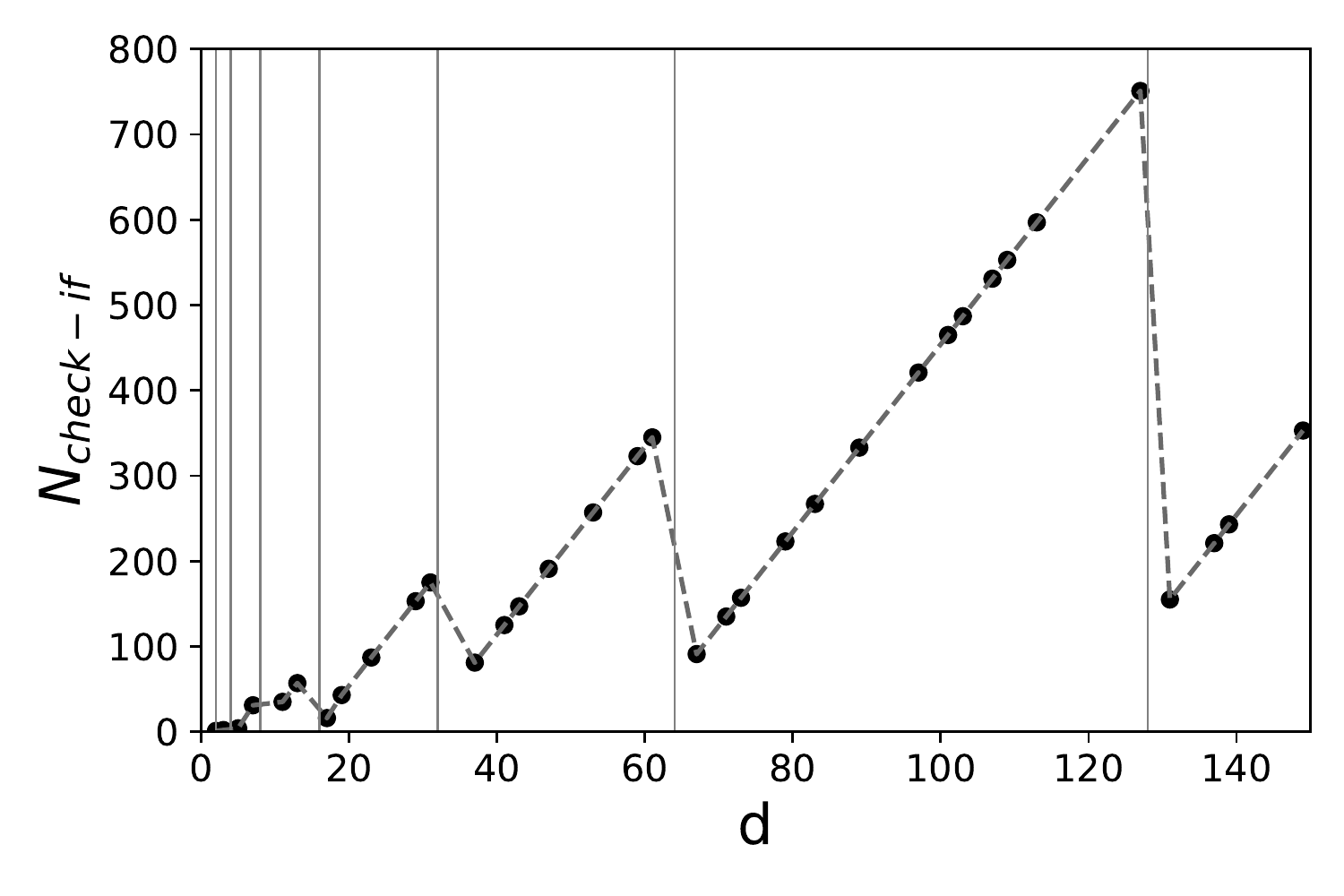}}
        \label{n_check-if}    
    \caption{(a) Plot of the required number of $\rm{CX}$ gates $N_{SUM}$ for the construction of a single SUM gate versus prime number $d$, the size of the qudit. The blue curve is based on the general decomposition \cite{barenco1995elementary} while the magenta curve is based on Ralph et al. \cite{ralph2007efficient} and the red curve uses the multiplexed decomposition shown in Fig.\ref{multiplexing_and_circuit}. (b) Plot of the total number of $\rm{CX}$ gates $N_{tot}$ for constructing the whole encoder of the $[[d,1,(d+1)/2]]_d$ QRS code versus $d$. (c) The ratio $R$ between the blue curve and the magenta (red) curve plotted in (a), respectively. The vertical gray lines correspond to $2^m$ for integer values of $m$. (d) Plot of the required number of $\rm{CX}$ gates $N_{check-if}$ for the ``check if'' part versus $d$ using quantum multiplexing.}
    \label{result}
\end{figure*}
Our focus now will be to compare the number of gates required to encode a QRS code versus its dimension when single-mode photons and multiplexed photons are used. The blue (red) curve of Fig.\ref{result}(a) shows the total number of $\rm{CX}$ gates required to implement a single SUM gate for a $d$-dimensional QRS code when single-mode (multiplexed) photons are used. The blue curve increases rapidly with the dimension of the code reaching $21182$ $\rm{CX}$ gates required for $d=139$ whereas the red curve increases modestly with the code dimension (using only $1049$ at $d=139$). Thus, applying quantum multiplexing has resulted in a significant reduction in CX gates. Moreover, for the multiplexing case, at $d=67$ and $d=137$ the number of $\rm{CX}$ gates is slightly less than $d=61$ and $d=131$, respectively. Although it seems an inconsistent result, it can be explained by looking at the second case of TABLE \ref{N_sum_table}. The number of $C_{k}X$ gates and the number of $C_{k+1}X$ gates are present in the ``check if'' part of the modulo conversion scheme. They reduce to $CX$ and $C_{2}X$ gates when multiplexing is applied. Therefore, the total number of $\rm{CX}$ gates will be much more affected by $N_{C_{k+1}X}$ than by $N_{C_{k}X}.$ These numbers depend on the number of qubits $k$ used to encode the states as well as on the dimension of the code, $d$. When $d\lesssim 2^{k}$, we require several ``check if'' auxiliary qubits to implement the modulo addition meaning $N_{C_{k+1}X}$ will be high. On the other hand, when $d\ll2^{k},$ $N_{C_{k}X}$ will be the preponderant term as we can now e use more qubits for the sum and less ``check if'' qubits. Although it is not clear from Fig.\ref{result}(a) we also have that $N_{C_{k+1}X}(d=31)>N_{C_{k+1}X}(d=37).$ This behavior is more evident in Fig.\ref{result}(b), which shows the total number of $\rm{CX}$ gates required to construct the whole encoder. The gate reduction by quantum multiplexing is also significant in the overall cost of the encoding circuit. Regardless of the value of $d$ we always require a number of conversion gates proportional to the code dimension. Thus, the number of conversion gates represented by the Hamming distance in TABLE \ref{N_sum_table} is not a relevant term for determining the behavior.

Next Fig.\ref{result}(c) shows the ratio of $N_{SUM}$ of the non-multiplexed case (blue) over the multiplexed case (red). It allows us to quantify the improvement we have. Here we can see that the curve ``jumps'' to a much higher ratio value when it crosses the gray lines, which correpond to $2^{k},$ with integer $k=2..7.$ We label the prime numbers in which this happens by $d_{\textrm{cross}}$. At $d>d_{\textrm{cross}}$ the ratio decreases as shown in regions ($d = 32$ to $63$, $64$ to $127$, and $128$ to $255$) of Fig.\ref{result}(c). This is due to the higher number of $\rm{C_{2}X}$ required as $d$ increases as already explained in the previous paragraph. However, at $d>19$ the ratio increases slightly until $d=31$ due to the fact that for this range of $d$ the number of ``check if'' increases only slightly as shows in Fig.\ref{result}(d). We also compare our results with the ones obtained by \cite{ralph2007efficient} shown by the magenta curves in Fig.\ref{result}. In \cite{ralph2007efficient}, the authors realize $C_k X$ gates by introducing one k-dimensional qudit, $2k-1$ two qubit gates and single qudit gates they refer to $X_a$ and $X_b$. Our results still show an advantage compared to the ones proposed in \cite{ralph2007efficient} in terms of gate reductions. Furthermore, the system proposed in \cite{ralph2007efficient} require special gates $X_a$ and $X_b$ which might require more two-qubit gates when qubits representation is in use.
We also determined that the encoding circuit for $[[d-1,d-2t,2t+1]]_d$ QRS codes where $d=2^m$ for integers $m$ does not require Toffoli or $C_{k}X$ gates, hence quantum multiplexing does not give any advantage in terms of gates reduction for these special cases. We show in the Appendix \ref{2m} an example of an implementation of a QRS encoding circuit over GF($2^m$). 

There is a well-known method \cite{cleve1997efficient} for constructing efficient encoding circuits for stabilizer codes, but this method requires multiple-target gates but not multiple-controlled gates. Therefore, the proposed method cannot be directly applied to such encoding circuits. The proposed method can be applied to the encoding circuit of quantum error correcting codes using multiple controlled gates.
 
\section{Discussion}
\label{sec_conc}
In this work, we have applied quantum multiplexing to the encoder of the Quantum Reed-Solomon and have shown that it is possible to dramatically reduce the number of controlled X gates utilized. Further, this method can be used to implement the encoder of QRS with fewer $\rm{CX}$ gates when the dimension of qudit $d$ is close to $2^k$ where $d$ is a prime number, and $k$ is greater than 6. This reduction of the number of gates should make the implementation of QRS codes more feasible - especially in the quantum communication setting. Another method of realizing $C_{k}X$ gates is to decompose the $C_{k}X$ gates directly into optical elements instead of $CX$ or $C_{2}X$ gates, which has been proposed in \cite{fiuravsek2006linear}. This method requires beam splitters with transmittance, mirrors, and phase shift for realizing multiple controlled-phase gates. For realizing $C_{k}X$ gates in real physical systems, it is necessary to quantitatively evaluate various resources and compare multiple implementation methods.

Next, our approach can be applied to other error correction codes and quantum algorithms that require a large number of gates. For example, most of the cost of a quantum circuit implementation of a discrete-time quantum walk algorithm \cite{aharonov1993quantum} is the unitary operator that walks a ``quantum walker'' in a certain space, called a Shift operator \cite{kempe2003quantum}. The circuit construction of the Shift operator depends on the space in which the quantum walk is performed (for instance, in a graph \cite{aharonov2001quantum}, one-dimensional space \cite{ambainis2001one}, or a hypercube \cite{moore2002quantum}), but in many cases, they consist of multiple controlled quantum gates \cite{douglas2009efficient}. This is because the Shift operator depends on the state of the ``quantum walker'' at a certain time and transitions to the quantum state of the ``quantum walker'' at the next time step, and such dependence on the previous time is realized by control gates. A second example is Grover's search algorithm \cite{grover1996fast}. The practical cost of Grover's algorithm depends on the complexity of the unitary, called an ``oracle'', which depends on the diffusion operator (inversion about the average). When the size of the search space is $N$-qubits, it is common to use $C_{N-1}X$ gates in the diffusion part \cite{lavor2003grover}.

To summarize, our paper shows that when quantum multiplexing is used, it is possible to significantly reduce the number of CX gates for the encoding circuits of QRS codes and important quantum algorithms. Future work includes optimization of circuits for correction and decoding of QRS codes and other QECCs.

\section*{Acknowledgements}
SN acknowledge Min-Hsiu Hsieh from Foxconn Quantum Computing Center for useful discusssions throughout this project. This work was supported by JSPS KAKENHI Grant Number 21H04880, the MEXT Quantum Leap Flagship Program (MEXT Q-LEAP) Grant Number JPMXS0118069605, and the JST Moonshot R\&D Grant Number JPMJMS2061.

% \newpage
% \bibliographystyle{aqis}
% \bibliographystyle{abbrv}
% \bibliography{AQIS_template}
\bibliography{main_bib}

\appendix
\section{Circuit implementation of the SUM gates (the Modulo adder)}
\label{modadder}
This appendix describes the construction of a quantum circuit for a $d$-dimensional SUM gate. The inputs of this gate are the pair of $k$ qubits $A$ and $B$ corresponding to the pair of $d$-dimensional qudits, and the output is the modulo with respect to $d$ of the sum of the two inputs.

As described in Section \ref{sec_QRS}, the circuit consists of an RCA part and a modulo part. The RCA part adds the numbers of the same place in inputs $A$ and $B$ and stores the overflow of the place in the ``carry'' qubits for use in calculating the next place. Fig.\ref{mod5_num} shows what information is stored in which qubit during the calculation of the RCA part for $d=5$.
\begin{figure}[htbp]
    \begin{center}
        \includegraphics[width=8cm, keepaspectratio]{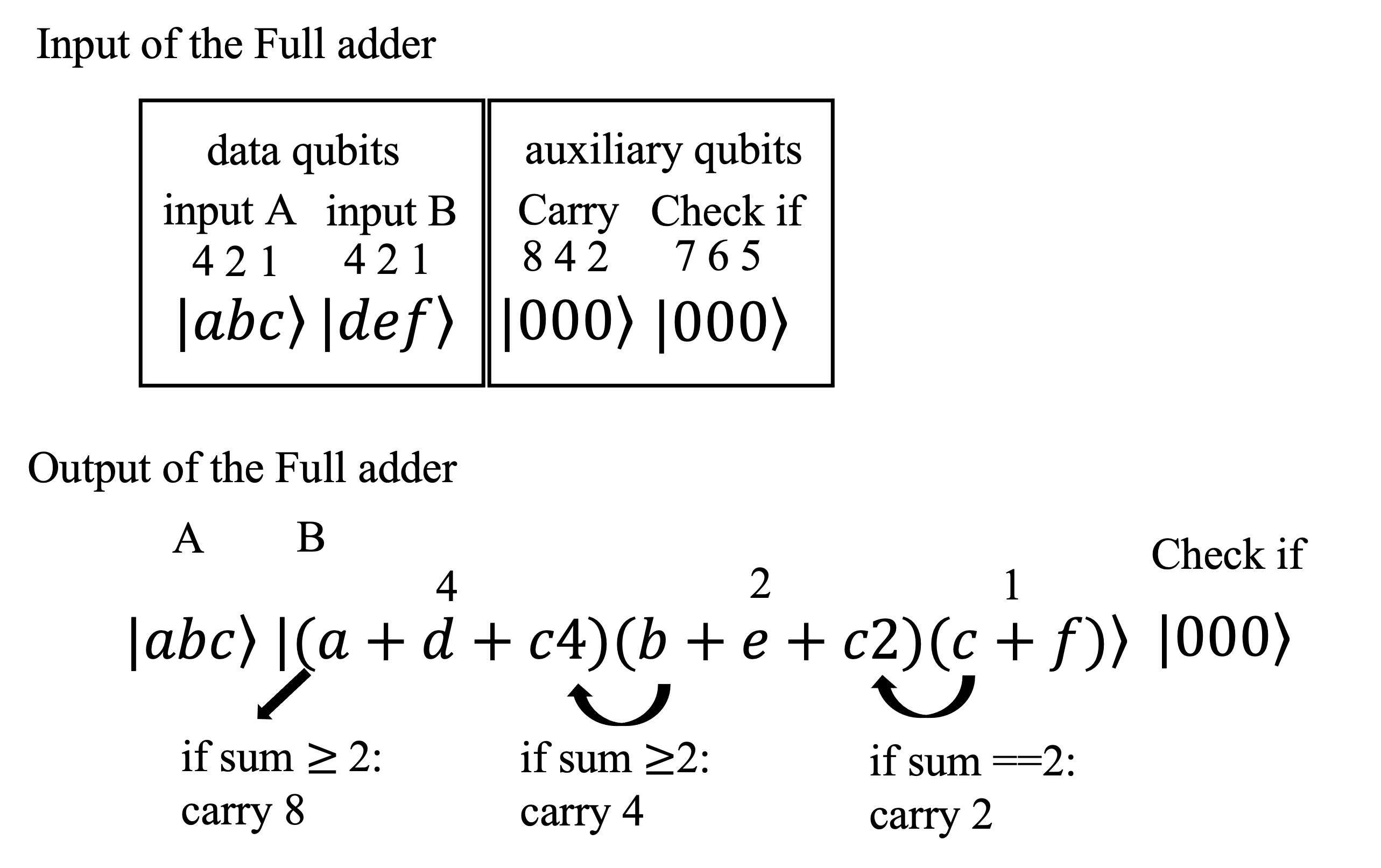}
        \caption{Input and output of the RCA operation. Carry qubits are used to store the overflow of each places for $4$, $2$, and $1$. Check-if qubits are not used here.}
        \label{mod5_num}
    \end{center}
\end{figure}

Next the modulo circuit converts the $\ket{(a+b)\,[\textrm{mod} \,2^k]}$ state stored in $B$ to the $\ket{(a+b)\,[\textrm{mod}\,d]}$ state. Such a conversion is necessary for the case in which the sum of the inputs is between $d$ to $2(d-1)$. We prepare a ``check if'' qubit for each of the numbers from $d$ to $2(d-1)$, and then the ``check if'' qubit is flagged (flipped from $\ket{0}$ to $\ket{1}$) when the sum of the inputs in each corresponding values $d$ to $2(d-1)$. In order to perform such an operation, we perform $C_{k}X$ gate with the data qubits $B$ as the control qubits with the corresponding ``check if'' qubit as the target qubit. At this time, we choose control and $0$-control based on the binary representation of each number. For normal controls, the gate is applied to the target when the control qubit is $\ket{1}$, but for $0$-control, the gate is applied when the control is $\ket{1}$. If one wants to invert the ``check if'' qubit for $5\,(101)$, use the control, $0$-control, and control for $C_{k}X$. Next, when the sum cannot be a number greater than $2^k$, the most significant carry can be used as the ``check if'' qubit of $2^k$. In this case, the $C_{k}X$ gate for ``check if'' qubit inversion is also unnecessary.

Consider the following example. If $d=5$, then a conversion is required when the sum of the inputs is between $5$ and $8$. Since $0\,[ \textrm{mod}\,8] = 0\,[\textrm{mod}\,5]$,... $4\,[\textrm{mod}\,8] = 4\,[\textrm{mod}\,5]$, no conversion is required for $d=0$ to $4$. You can also substitute $Carry_8$ as a ``check if'' qubit of $8$ in this case. After the flip of the ``check if'' qubits, converts the data qubits $B$ based on the ``check if'' qubit. As an example, we execute controlled $XIX, XXX, XIX, IXX$ gates with ``check if'' qubit as control qubit for the conversion from $5(101)$, $6(110)$, $7(111)$, $8(000)$ to $0(000)$, $1(001)$, $2(010)$, $3(011)$ for $d=5$.

It is important to note that when the maximum value that the sum can take $2(d-1)$ is greater than $2^k$, the largest ``carry'' qubit must be used as a control qubit at the same time. For example, in the $d=7$ case, the $mod\,2^k\rightarrow mod\,d$ transformation is needed when the sum is between $7$ and $14$.
However, to determine that the sum is $8,9,10,11,12$, not only must the data qubit be $000,001,010,011,100$, but $Carry_8$ must be $1$. This makes it necessary to use the $C_{k+1}X$ gate instead of $C_{k}X$. Such $C_{k+1}X$ gates can only be decomposed up to $\rm{C_2X}$ gates when multiplexing decomposition is used since the control qubits are in two qudits (data and carry). For the decomposition of $\rm{C_2X}$ gates to $\rm{CX}$ gates, we need to use the general method.

The case of constructing a modulo adder for the $5$ and $7$ dimensional qudit using $2^3$ multiplexed photon is shown in the Fig.\ref{mod5} and \ref{mod7}. 

\begin{figure}[htbp]
    \begin{center}
        \includegraphics[width=8cm, keepaspectratio]{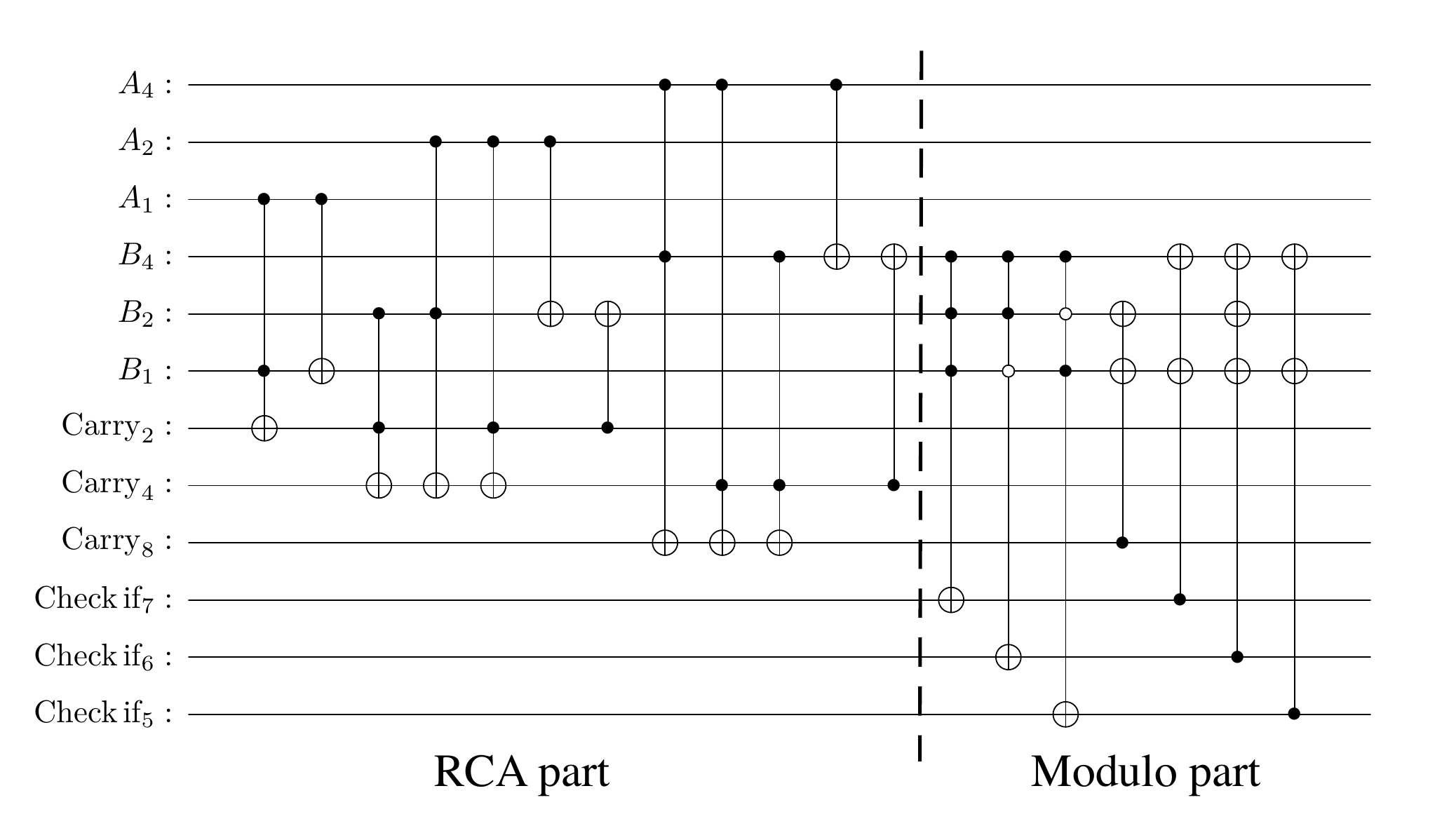}
        \caption{Circuit layout for the SUM gate for qudit of dimension-$5$ in the qubit system. It consists of a RCA and a modulo part. The black dots represent control qubits, and the white dots represent $0$-control qubits.}
        \label{mod5}
    \end{center}
\end{figure}

\begin{figure}[htbp]
    \begin{center}
        \includegraphics[width=8cm, keepaspectratio]{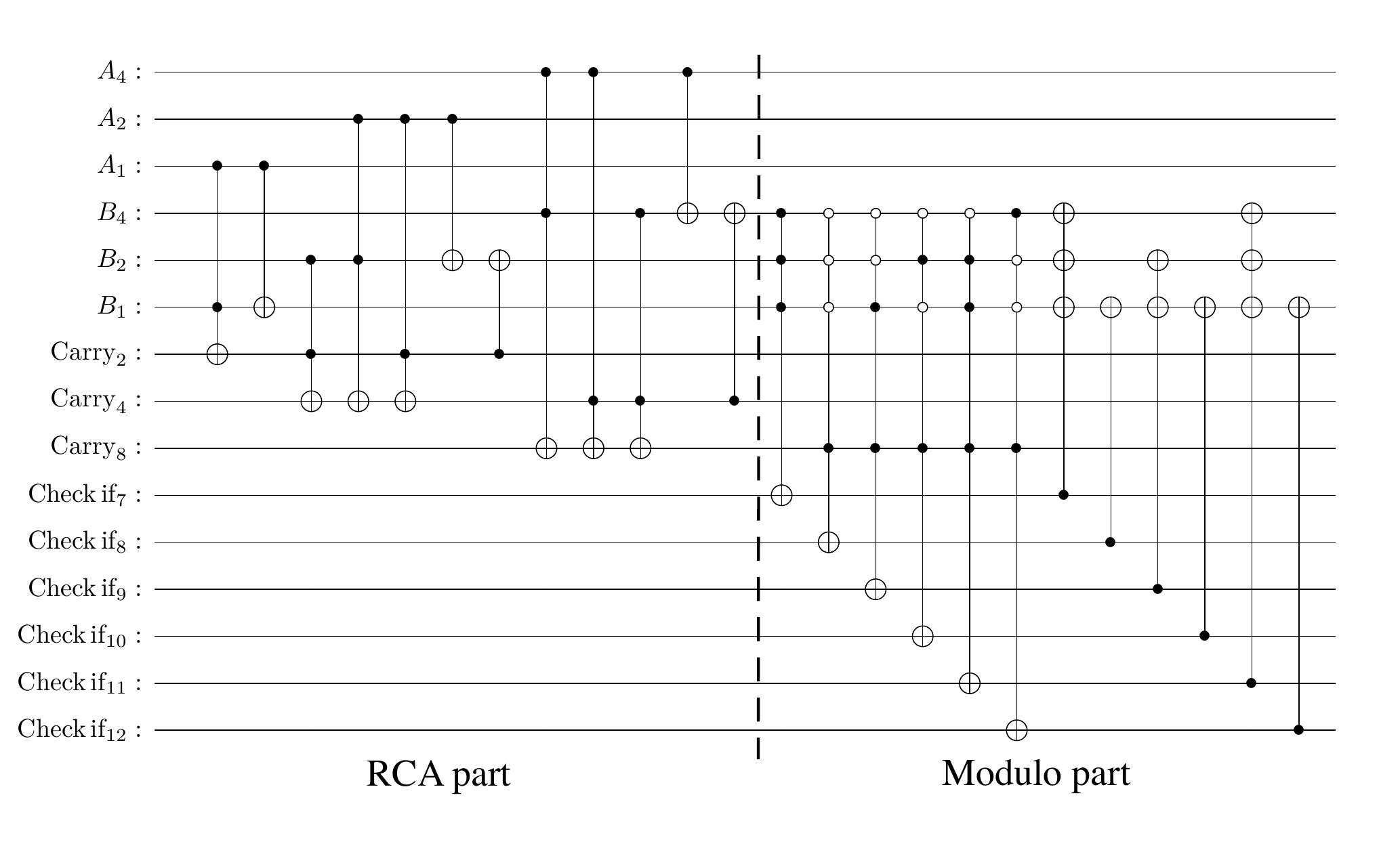}
        \caption{Circuit layout for the SUM gate for qudit of dimension-$7$ in the qubit system. This circuit includes the $C_{K+1}X = C_4X$ gates which has a control qubit in $\textrm{Carry}_8$.}
        \label{mod7}
    \end{center}
\end{figure}

\section{The encoding circuit for the $GF(2^m)$ QRS codes}
\label{2m}
Let us consider the $[[d-1,1,D\geqq \frac{N+1}{2}]]_d$ QRS code on $GF(2^m)$ with $d=2^m$. This code construction is based on $[[kN,k(N-2K),d\geqq K+1]]$ (written in binary form) for $N-2K=1$. To illustrate how this works let us take the $GF(2^2)$ code as a simple example. The Galois Field for the code is represented as
\begin{equation}
    GF(2^2) = \{0,1,\alpha, \alpha^2\}
\end{equation}
where we choose $x^2+x+1 = 0$ as the primitive polynomial. Then the elements of the field be written in polynomial and vector representation:
\begin{eqnarray*}
    GF(4) &=& \{0,1,\alpha, \alpha^2\}   \,\,\textrm{exponential representation}\\
    &=& \{0,1,\alpha,\alpha+1\} \,\textrm{polynomial representation}\\
    &=& \{00, 01, 10, 11\} \,\textrm{vector representation}
\end{eqnarray*}
We use the classical code $\mathcal{C}=[3,2,2]_4$ and it's dual $\mathcal{C}^\perp$ for our CSS construction. The generator polynomial for the code is given as
\begin{equation}
    g^\perp(x) = (x-1)(x-\alpha) = x^2 + (1+\alpha)x + \alpha
\end{equation}
from which we get the generator and parity check matrices:
\begin{equation}
    G = 
    \begin{pmatrix}
    1 & 0 & \alpha \\
    0 & 1 & \alpha^2
    \end{pmatrix}
\end{equation}
\begin{equation}
    H = 
    \begin{pmatrix}
    \alpha & \alpha^2 & 1 
    \end{pmatrix}.
\end{equation}
The quantum circuit for the encoder is shown in Fig.\ref{gf4_encoder}

Now we denote the qudit gates required for our implementation as $C1$, $C\alpha$, and $C\alpha^2$ defining them as follows:
\begin{eqnarray}
    C1(\ket{a}\ket{b}) &=&   \ket{a}\ket{a+b}\\
    C\alpha (\ket{a}\ket{b}) &=& \ket{a}\ket{\alpha a + b}\\
    C\alpha^2 (\ket{a}\ket{b}) &=& \ket{a}\ket{\alpha^2 a + b}
\end{eqnarray}
These gates correspond to the addition and multiplication of Galois elements for the calculation of coefficients of polynomials in Quantum Reed-Solomon codes. They can be implemented in a multiplexing system as shown in the Fig.\ref{gf4_gates}. As such, the gates of $\rm{GF} (2^m)$ can be constructed using only $\rm{CX}$ gates.

By generalizing the above method, we can calculate the cost of the gates we will need for implementing the $[[d-1,1,D\geqq \frac{N+1}{2}]]_d$ QRS code. First the $C1$ gate can be achieved with $k$ $\rm{CX}$ gates while the $C\alpha^n$ gates can be implemented with $\displaystyle \sum_{p=0}^{k-1}\rm{H_w}(\alpha^{n+p})$ $\rm{CX}$ gates for the $GF(2^k)$ system where $\rm{H_w}()$ is a function whose input is an exponential representation of a element of the Galois Field and whose output is a Hamming weight of the vector representation of the element. In this case, we do not need the $C_kX$ gate to implement the encoding circuit of the code. Therefore, the application of quantum multiplexing does not lead to any advantage in terms of the reduction of the number of gates.

\begin{figure}[htbp]
    \begin{center}
        \includegraphics[width=8cm, keepaspectratio]{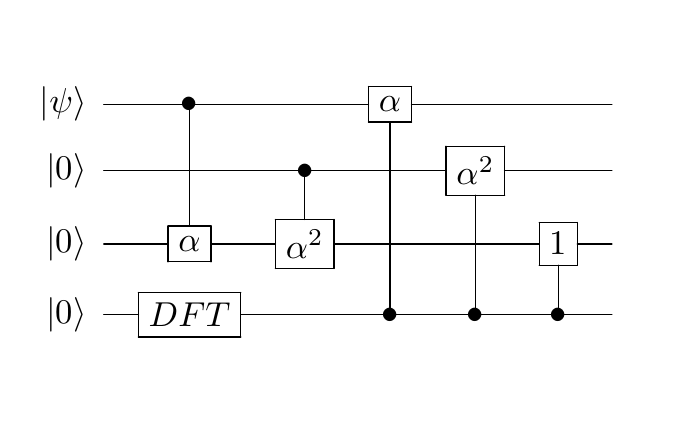}
        \caption{The encoding circuit of $[[3,1,D\geqq 2]]_4$ Quantum Reed-Solomon code. Each horizontal line represents a 4-dimensional qudit (ququart).  The two qu-dit gates defined by the black dot representing the control and the target with a square box containing the ``$1$'' ($\alpha, \alpha^2$) symbol represent the $C1$ ($C\alpha, C\alpha^2$) gates defined in (B5-7).}
        \label{gf4_encoder}
    \end{center}
\end{figure}
\begin{figure}[htbp]
    \begin{center}
        \includegraphics[width=9cm, keepaspectratio]{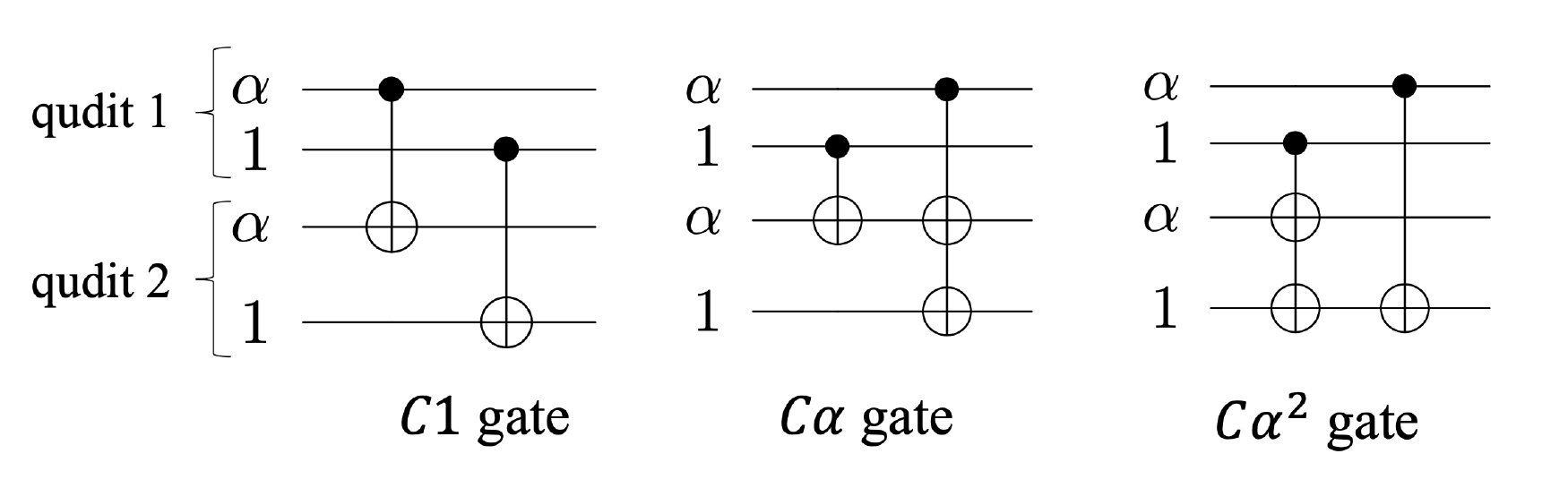}
        \caption{Qubit implementation of the $C1$, $C\alpha$, and $C\alpha^2$ gates.}
        \label{gf4_gates}
    \end{center}
\end{figure}
\newpage
\section{Proof for the multiplexed decomposition}
\label{Proof}
\begin{thm*}
A $C_{k}X$ gate, which has $k \in \ \mathbb{Z}^+$ control time-bin qubits in a photon and a target qubit in another photon, can be decomposed into a single $\rm{CX}$ gate alongside several optical switches.
\end{thm*}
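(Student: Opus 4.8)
The plan is to prove the statement by induction on the number $k$ of control time-bin qubits, peeling off one control at a time so that a $C_{k+1}X$ gate is reduced to a $C_{k}X$ gate acting on a single output branch of an optical switch. For the \textbf{base case} $k=1$, a $C_1 X$ gate is by definition just a $\rm{CX}$ gate, so it is realized by a single $\rm{CX}$ with no optical switch required, trivially satisfying the claim. The case $k=2$, the Toffoli gate, is precisely the construction of Fig.~\ref{multiplexing_and_circuit}(a) established in \cite{piparo2020aggregating}, and can serve as an equivalent anchor that also fixes the physical picture for the inductive step.

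For the \textbf{inductive step}, I would assume that any $C_{k}X$ gate whose $k$ controls live in the time-bin DOF of one photon and whose target lives in another photon can be implemented with a single $\rm{CX}$ together with some finite collection of optical switches. Given a $C_{k+1}X$ gate, first insert one optical switch that routes the control photon conditioned on the value of its $(k+1)$-th time-bin qubit: components in which this qubit is $\ket{1}$ are sent to one spatial mode, and those in which it is $\ket{0}$ to another. On the $\ket{1}$ mode the photon still carries the remaining $k$ time-bin controls, so I apply to that mode a $C_{k}X$ gate targeting the second photon; by the inductive hypothesis this costs a single $\rm{CX}$ plus optical switches. The $\ket{0}$ mode bypasses the target entirely, and a second optical switch recombines the two modes into a single spatial mode. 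The only $\rm{CX}$ in the entire construction is the one supplied by the hypothesis, so $C_{k+1}X$ is realized with a single $\rm{CX}$ and several additional optical switches, closing the induction.

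The step I expect to be the main obstacle is not the combinatorial bookkeeping but the verification that the switch-based routing is coherence-preserving. One must check that splitting the control photon into two spatial modes, acting with $\rm{CX}$ on only one of them, and recombining acts as the identity on every branch of the superposition except the intended all-ones branch, and that path lengths and timing are matched so that no which-path information leaks into the environment. Making precise that the target photon interacts with exactly the routed $\ket{1}$ mode, so that the net map flips the target if and only if all $k+1$ controls are $\ket{1}$ while leaving the other $2^{k+1}-1$ time-bin components untouched, is where the argument carries its real weight; the inductive reduction itself is immediate once this coherent routing is granted.
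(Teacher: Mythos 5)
Your proposal is correct and follows essentially the same route as the paper's own proof: induction on $k$ with the base case $C_{1}X = \rm{CX}$, and an inductive step that realizes $C_{k+1}X$ by switching out the $(k+1)$-th time-bin component and applying the inductively decomposed $C_{k}X$ to the routed branch. Your added remarks on coherence preservation and path matching go beyond what the paper states explicitly, but the core argument is identical.
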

\begin{proof}
Let $S(k)$ be the statement that the $C_{k}X$ gate in the theorem can be decomposed into one $\rm{CX}$ gate and several optical switches. 

We will now give a proof by induction on $k$ beginning with $k=1$. Since $C_{1}X$ gate is a $\rm{CX}$ gate from a control photon to a target photon, it holds by definition. Now for any integer $k\geqq 1$, if $S(k)$ holds, then $S(k+1)$ also holds. Assume the induction hypothesis $S(k)$ is true. Since the $C_{k+1}X$ gate can be realized by controlling the $C_{k}X$ gate with the ($k+1$)-th timebin qubit, it can be decomposed as shown in Fig.\ref{dec_ck+1xproof}. From $S(k)$, the $C_{k}X$ gate can be implemented with one $\rm{CX}$ and several OSs, therefore $S(k+1)$ holds.

\begin{figure}[h!]
    \subfigure[]{%
        \includegraphics[clip, width=0.44\columnwidth]{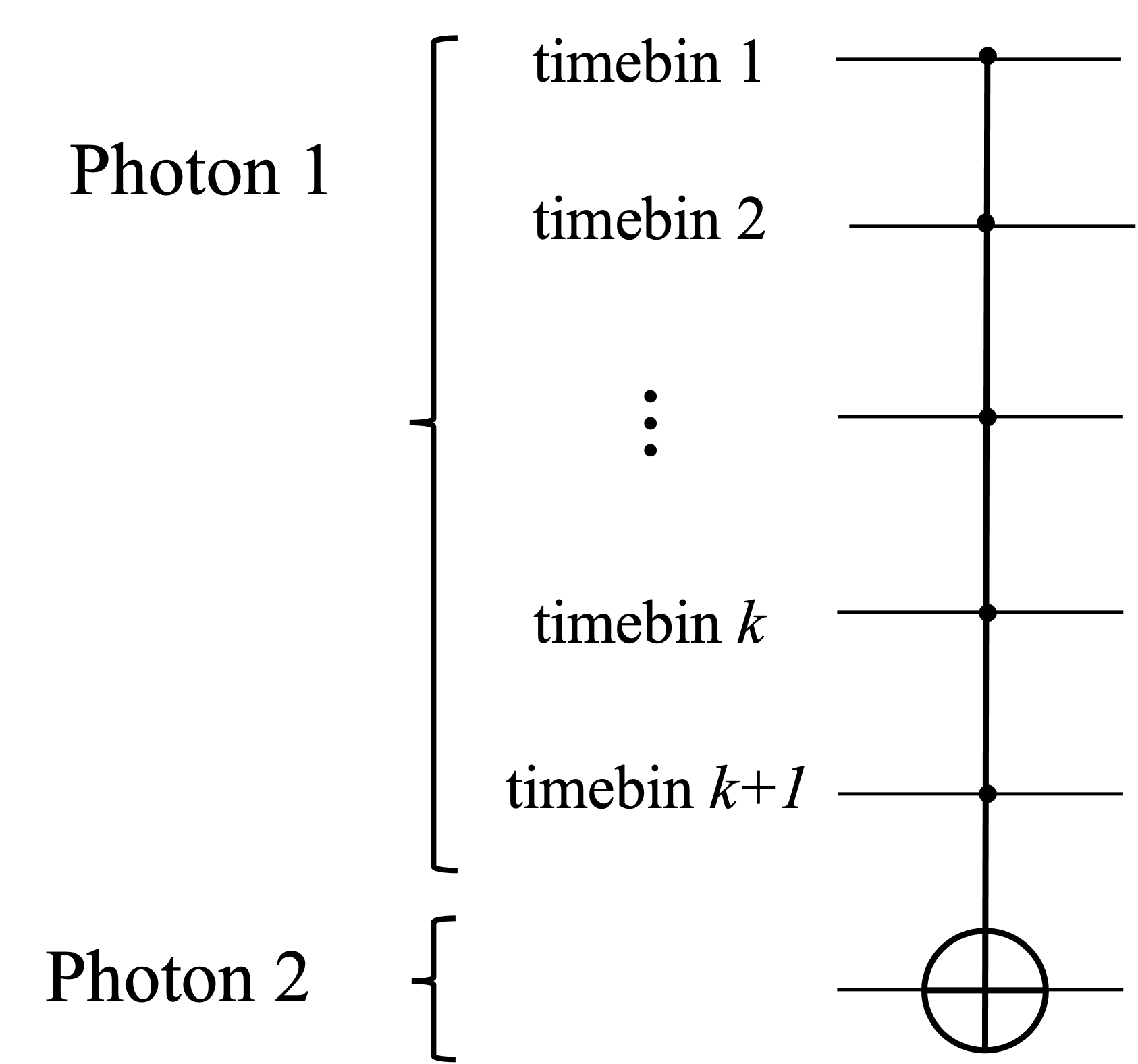}}%
    \subfigure[]{%
        \includegraphics[clip, width=0.44\columnwidth]{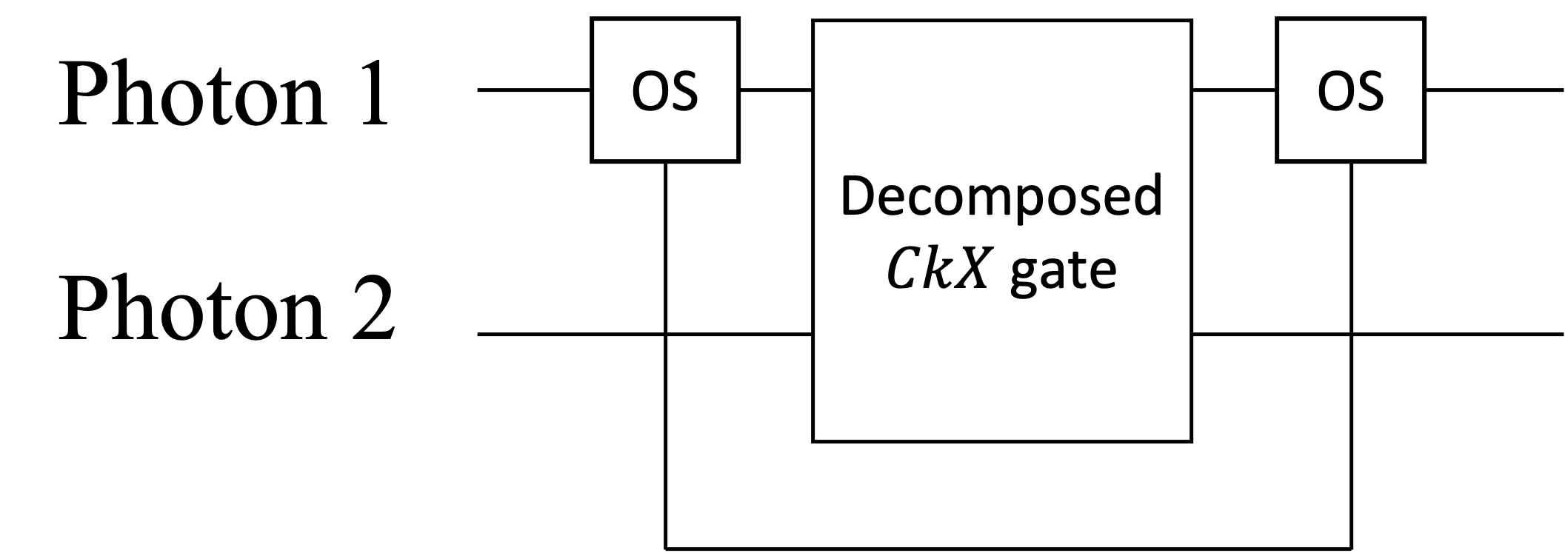}}%
    \caption{(a) $C_{k+1}X$ between two photons. (b) The circuit implementation for (a). Optical switches divides the (k+1)-th timebin qubit.}
    \label{dec_ck+1xproof}
\end{figure}
\end{proof}
Note that using the implementation of $\rm{CX}$ gate in the Appendix C of \cite{piparo2020aggregating}, the target qubit can be either polarization or timebin.

\end{document}